\documentclass[11pt]{llncs}
\def\draft{0}
\def\anonymous{0} 
\usepackage[margin=1in]{geometry}
\usepackage[T1]{fontenc}
\usepackage{graphicx} 
\usepackage{cite}
\usepackage{enumerate}
\usepackage{amsmath,amssymb,amsfonts}
\usepackage{booktabs}
\usepackage{braket}
\usepackage{algorithm}
\usepackage{algpseudocode}
\usepackage{hyperref}
\usepackage{cleveref}
\usepackage{color}
\usepackage{xcolor}

\newcommand{\hide}[1]{}

\def\clap#1{\hbox to 0pt{\hss#1\hss}}

\newcommand\textunderset[2]{%
  \leavevmode
  \vtop{\offinterlineskip
    \halign{%
      \hfil##\hfil\cr 
      \strut#2\cr
      \noalign{\kern-.3ex}
      #1\strut\cr
    }%
  }%
}
\newcommand\textoverset[2]{%
  \leavevmode
  \vbox{\offinterlineskip
    \halign{%
      \hfil##\hfil\cr 
      \dynscriptsize\strut#1\cr
      \noalign{\kern-.3ex}
      #2\strut\cr
    }%
  }%
}

\providecommand{\deq}{\mathrel{:=}}

\providecommand{\getsr}{\stackrel{\smash{\$}}\gets}

\makeatletter
\def\metadef#1#2{%
  \def\metadef@iter##1{\ifx##1;\else \expandafter\newcommand\csname#1\endcsname{#2}\expandafter\metadef@iter\fi}%
  \expandafter\metadef@iter%
}
\makeatother

\metadef{vec#1}{\mathbf #1}abcdefghijklmnopqrstuvwxyz;
\metadef{vec#1}{\mathbf #1}ABCDEFGHIJKLMNOPQRSTUVWXYZ;

\metadef{mat#1}{\mathbf #1}ABCDEFGHIJKLMNOPQRSTUVWXYZ;

\metadef{frak#1}{\mathfrak #1}ABCDEFGHIJKLMNOPQRSTUVWXYZ;

\metadef{c#1}{\mathcal #1}ABCDEFGHIJKLMNOPQRSTUVWXYZ;

\metadef{s#1}{\mathsf #1}ABCDEFGHIJKLMNOPQRSTUVWXYZ;

\providecommand{\ZZ}{\mathbb Z}

\makeatletter
\@ifpackageloaded{mathtools}{}{
  \def\clap#1{\hbox to 0pt{\hss#1\hss}}

}
\makeatother

\newcommand{\negl}{\mathop{\operatorname{negl}}}
\newcommand{\poly}{\mathop{\operatorname{poly}}}
\newcommand{\bit}{\{0,1\}}

\providecommand{\fn}[1]{\ifmmode\operatorname{\mbox{\textnormal{#1}}}\else\mbox{\textnormal{#1}}\fi}
\newcommand{\sffn}[1]{\ifmmode\operatorname{\mbox{\textnormal{\textsf{#1}}}}\else\mbox{\textnormal{\textsf{#1}}}\fi}

\newcommand{\Sim}{\textnormal{\textsf{Sim}}}

\providecommand{\tk}{\textnormal{\textsf{tk}}}

\ifnum\draft=1
\newcommand{\bnote}[1]{%
  \textcolor{brown}{\textbf{[Boyang:} #1\textbf{]}}%
}
\newcommand\luojian[1]{\textcolor{cyan}{Luojian: #1}}
\else
\newcommand{\bnote}[1]{}
\newcommand{\luojian}[1]{}
\fi

\begin{document}

\title{
Beyond NISQ Assumptions: One-time Memory in the Classically Accessible Random-Oracle Model
}
\ifnum\anonymous=1
\author{Anonymous Submission}
\institute{}
\else
\author{
  Boyang Chen\inst{1} \and
  Tianren Liu\inst{2} \and
  Luojian Wei\inst{2}
}
\authorrunning{Boyang Chen, Tianren Liu, and Luojian Wei}
\institute{
  Department of Computer Science and Technology, Tsinghua University, Beijing, China
  \and
  Center on Frontiers of Computing Studies, Peking University, Beijing, China
}
\fi
\maketitle

\begin{abstract}
  Quantum information enables many cryptographic primitives that are
  impossible in the classical world.  A line of works has developed
  cryptographic protocol under the assumption that quantum adversaries
  are restricted to noisy intermediate-scale quantum (NISQ) computing
  power, enabling strong one-time functionalities. But the advent of
  early fault-tolerant quantum computers eras will allow deeper
  logical quantum circuits, calling into questions the applicability
  of these NISQ-based assumptions.

  In this work, we adapt the classically accessible random oracle
  model (CAROM) as in~\cite{AC:BDFLSZ11} and~\cite{EPRINT:AnaKal22},
  in which adversaries are only allowed to classically query the
  random oracle.  The restriction is well motivated for NISQ quantum
  adversaries and may remain plausible in the presence of early
  fault-tolerant quantum computers.  Then, we show that an efficient
  simulation-secure one-time memory (OTM) is possible under CAROM. Our
  protocol uses only BB84 states and has quadratic communication: for
  \(\lambda\)-bit message and integer-valued parameters $n=n(\lambda)$
  and $\ell =\ell(\lambda)$, the construction uses $n \ell$ qubits and
  $(n+2)\lambda$ classical bits, and for any quantum adversary with at
  most $2^{\ell/2-1}-1$ classical queries to the random oracle, its
  simulation advantage is at most
  $$ (n+3) \left(\frac 34\right)^n. $$
  Hence exponentially small simulation advantage in $n$.

\end{abstract}

\section{Introduction}
\label{sec:introduction}

Quantum information enables many classically impossible cryptographic
applications. Early milestones include Wiesner's quantum
money~\cite{wiesner1983conjugate} and the BB84 quantum
key-distribution protocol~\cite{C:BenBra84}. Recently, a
variety of exciting applications have emerged, including unclonable banknotes
with public
verification~\cite{STOC:AarChr12,JC:Zhandry21,EC:LiuMonZha23,STOC:BosNehZha25},
copy-protection of
programs~\cite{C:ALLZZ21,C:CLLZ21} and one-time
programs~\cite{C:BroGutSte13,STOC:GLRRV25}, encryption
with certified
deletion~\cite{TCC:BroIsl20,C:BarKhu23,EC:BGKMRR24,AC:HMNY21},
and untelegraphable encryption~\cite{TCC:CKNY25}.

However, several quantum cryptographic functionalities are impossible
in the standard model. For example, one-time memory (OTM), proposed by
Goldwasser, Kalai and Rothblum~\cite{C:GolKalRot08} is a
non-interactive form of 1-out-of-2 oblivious transfer, which is
impossible in the standard model.  In the one-time memory protocol, a
sender prepares a token holding two messages $m_0$ and $m_1$.  A
receiver chooses a bit $b$, obtains $m_b$, and should learn nothing
more about $m_{1-b}$.  Evaluation consumes the token.  This simple
interface is a building block for one-time programs and one-time
proofs~\cite{C:GolKalRot08,C:BroGutSte13,STOC:GLRRV25}: it
allows access to a program or a proof to be limited to a single use.

Restrictions of adversary's computational resources are introduced to
bypass the impossibility. Following this approach, a line of works
construct efficient one-time memory protocols under various
computation-bounded models.
The isolated-qubits model restricts the adversary to local operations
and classical communication~\cite{ITCS:Liu14,C:Liu14}.  Depth-bounded
constructions combine limits on quantum computation with time-lock
puzzles or
obfuscation~\cite{ITCS:Liu23a,stambler2025semquantum,stambler2026simple}.
However, all these works impose a very strong assumption: the
adversary can either implement only very shallow circuits, or has no
storage at all. All these restrictions on the adversary fall into
the near-term intermediate-scale quantum computing (NISQ) paradigm, in
the sense that the adversary is always characterized by shallow
quantum circuits.

However, as we enter the era of early fault-tolerant quantum
computing, such restrictions may become increasingly difficult to
justify. This naturally raises the following question:

\begin{center}
  \emph{Can we design an efficient one-time memory protocol with a more
    robust assumption?}
\end{center}

Our answer is affirmative: we adapt the model
called the \emph{classically accessible random-oracle model} (CAROM),
which is arguably safe even against early fault-tolerant computers,
and construct one-time memory in it. Several papers consider the
classically accessible random-oracle model and contrast it with
quantum oracle access, including~\cite{AC:BDFLSZ11, EC:YamZha21,
  ITCS:LLPY24}. \cite{EPRINT:AnaKal22} formalizes the classically accessible
model and names it CAROM. In CAROM, a quantum adversary may perform
joint quantum operations on the token and its workspace, but may query
the random oracle only on classical inputs.

\begin{definition}[CAROM, informal]
  Let $\cO:\mathcal{D}\to\bit^\lambda$ be a uniformly random function
  on a classical query domain $\mathcal{D}$.  A $q$-query CAROM
  adversary may perform arbitrary quantum computation, retain quantum
  memory, and use joint or entangled operations between oracle calls.
  It may make at most $q=q(\lambda)$ adaptive queries in total, and
  every query to $\cO$ must be a classical string.
\end{definition}

Thus, CAROM captures a more general restriction: the adversary
need not have shallow depth or short coherence
time; the only requirement is that it
cannot coherently evaluate the random oracle. CAROM can be
instantiated with a hard public hash function in the real world. 
Prior work suggests that replacing the random oracle with SHA-2 or SHA-3
might be a plausible instantiation under the random-oracle heuristic~\cite{STOC:ACCGSW23}. 
Even for early fault-tolerant quantum computers, this may still
remain plausible, since coherent evaluation is plausibly hard for
state-of-the-art quantum computers.

In the CAROM, we can construct statistically secure one-time memory.

\begin{theorem}[Main theorem, informal]
  For any \(\lambda \in \ZZ\), there is an OTM scheme for message
  length \(\lambda\), with communication of $n\ell$ qubits and
  $n\lambda+2\lambda$ classical bits. Its correctness error is at most
  $n2^{-\lambda}$.  For an arbitrary strategy making at
  most $q$ adaptive classical random-oracle queries, the 
  security loss is at most
  \[
    2^{1-n}+(n+1)\left( \frac{1}{2}+\frac{q+1}{2^{\ell/2+1}}
    \right)^{n}.
  \]
  where $n=n(\lambda)\geq 1$ and $\ell=\ell(\lambda)\geq 1$ are
  functions of $\lambda$. In particular, if
  $q\le 2^{\ell/2-1}-1$, the security loss is at most
  $(n+3) \left(\frac 34\right)^n$.
\end{theorem}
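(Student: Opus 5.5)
The plan is to fix a concrete construction, build a simulator that extracts the receiver's choice bit from its classical oracle queries, and reduce simulation security to a BB84-type uncertainty bound. The construction is a guess consistent with the stated parameters. The sender samples, for each block $i\in[n]$, a basis bit $\theta_i\getsr\bit$ and a string $x_i\getsr\bit^\ell$, and sends $\ket{x_i}_{\theta_i}$: $\ell$ BB84 qubits, all in the same basis $\theta_i$. It also sends $n$ classical tags $t_i=\cO(\texttt{tag},i,x_i)\in\bit^\lambda$ and two ciphertexts
\[
c_b = m_b\oplus \cO\bigl(\texttt{key},b,(x_i)_{i:\theta_i=b}\bigr),\qquad b\in\bit .
\]
This gives $n\ell$ qubits and $(n+2)\lambda$ classical bits. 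The honest receiver measures every block in basis $b$, obtaining $y_i$, and keeps exactly the indices with $\cO(\texttt{tag},i,y_i)=t_i$. When $\theta_i=b$ these are precisely the indices with $y_i=x_i$. When $\theta_i\neq b$, an index is wrongly kept only through a tag collision, which happens with probability $2^{-\lambda}$ per block; a union bound gives the $n2^{-\lambda}$ correctness error. The receiver then queries the key and unmasks $c_b$.

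For security, the simulator runs the adversary and answers the classical oracle queries by lazy sampling. It gives the adversary fresh random strings in place of $t_i$ and $c_0,c_1$, and reprograms them on the fly. At the first query of the form $(\texttt{key},b,\cdot)$ whose argument equals the correct tuple $(x_i)_{i:\theta_i=b}$, it sets the choice bit to $b$, calls the ideal functionality to obtain $m_b$, and programs $\cO$ at that point to $c_b\oplus m_b$. Because all queries are classical, the simulator sees each query in the clear, and this programming is perfect.

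The real and simulated views then differ only on two bad events. The first is that the set $\{i:\theta_i=b\}$ is empty for some $b$; this has probability at most $2\cdot 2^{-n}=2^{1-n}$. The second is that the adversary queries correct key tuples for \emph{both} $b=0$ and $b=1$. Querying a correct $x_i$ through $\texttt{tag}$ queries helps only by the same mechanism, so it can be folded into the same event.

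Bounding the second event is the main step, and I expect it to be the main obstacle. I would condition on the number $k=|\{i:\theta_i=0\}|$ and take a union bound over its $n+1$ possible values; this yields the factor $n+1$. Event 2 means the adversary has output, among its at most $q+1$ classical guesses (counting a final output), the full vector $(x_1,\dots,x_n)$ in which every block is correct in its own basis. I would cast this as a monogamy/uncertainty game. The adversary first acts on all blocks jointly with unbounded quantum memory, and only afterwards learns which key it must match. The classical queries are then a POVM whose outcomes are lists of at most $q+1$ candidates. The per-block success operator is an average over $\theta_i$ of projectors. A Tomamichel--Fehr--Kaniewski--Wehner style operator-norm bound gives
\[
\bigl\|\tfrac12(P^{(0)}+P^{(1)})\bigr\|\le \tfrac12+\tfrac12\max\|P^{(0)}P^{(1)}\| .
\]
If $P^{(0)}$ projects onto at most $q+1$ computational strings and $P^{(1)}$ onto Hadamard strings, then $\|P^{(0)}P^{(1)}\|\le (q+1)2^{-\ell/2}$, using the overlap $2^{-\ell/2}$ between the two bases. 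This gives the per-block factor $\tfrac12+\frac{q+1}{2^{\ell/2+1}}$.

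The product structure across blocks follows from the tensor-product form of the bases, by the standard norm argument for sums of products of commuting projectors; this step needs care, because the adversary's guess lists are adaptive and correlated across blocks. I would first try to treat the whole adaptive query transcript as a single POVM with at most $q+1$ candidate tuples, and apply the norm bound blockwise to that POVM. Summing the two bad events gives the stated loss $2^{1-n}+(n+1)\bigl(\tfrac12+\frac{q+1}{2^{\ell/2+1}}\bigr)^n$. Taking $q\le 2^{\ell/2-1}-1$ makes the per-block factor at most $3/4$, so the loss is at most $2^{1-n}+(n+1)(3/4)^n\le (n+3)(3/4)^n$, since $2^{1-n}\le 2(3/4)^n$.
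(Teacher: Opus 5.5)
Your simulator, the $2^{1-n}$ term, the per-word operator bound $\frac12+\frac{q+1}{2^{\ell/2+1}}$, and the final arithmetic all match the paper. The gap is the step you flag as needing care, and your suggested fix does not work there.

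In CAROM the oracle answers depend on the secret. A tag query is an equality test on word $i$, and a correct key query returns $c_b\oplus m_b$. So the adversary learns adaptively which of its guesses were right, and its transcript is not the outcome of one POVM on $\bigotimes_i\ket{x_i}_{\theta_i}$. There is also no basis ``announced afterwards'' as in monogamy games.

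The standard way to remove the oracle is to answer every query as a miss. That reproduces the real run only up to the \emph{first} hit, so the resulting list is guaranteed to contain one correct word or branch, not all $n$. Measuring each word in a random basis already hits some word with probability at least $1-2^{-n}$, so this bounds nothing. Guessing the positions of all hits instead costs a factor of roughly $\binom{q}{n}$, which destroys the bound. Your tensor-product norm argument is valid only for a non-adaptive list measurement.

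The missing idea is to remove the oracle one word at a time and combine the words by multiplicativity. Each word is a $q$-round single-word game with an equality oracle on $(x_i,\theta_i)$. In a single-word game, first-hit recording \emph{does} suffice, so list recovery plus your BB84 estimate bounds its value by $p(q,\ell)$. The real adversary is then embedded into the $n$-fold parallel repetition by routing each query to the coordinates it concerns, with dummy queries elsewhere and $q$ rounds per coordinate. Perfect parallel repetition for quantum interactive proofs (\Cref{lem:parallel-repetition}) then gives $p(q,\ell)^n$ without ever turning the joint adaptive transcript into a single measurement.

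There are two smaller issues. First, your factor $n+1$ is not explained. Conditioning on $k$ and averaging loses nothing, and literally conditioning destroys the i.i.d.\ uniform bases that the averaged operator $\frac12(P^{(0)}+P^{(1)})$ requires. In the paper the factor comes from the embedding. Per-word equality tests can certify that a branch query is correct on the words it names, but not that the branch is complete. For your keyed-tuple scheme, they cannot certify that the index set is all of $S_b$ rather than a subset. So the reduction guesses $|S_1|$ and is right with probability $1/(n+1)$.

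Second, your correctness analysis fails for your own scheme. The tag $\cO(\mathsf{tag},i,x_i)$ omits the basis. When $\theta_i\neq b$, the uniform outcome $y_i$ equals $x_i$ with probability $2^{-\ell}$, and the tag check then passes with no collision at all. The error is therefore about $\frac n2\,2^{-\ell}$, far above $n2^{-\lambda}$ once $\ell$ is noticeably smaller than $\lambda$ (e.g.\ $\ell=\mathrm{polylog}\,\lambda$, the regime the paper highlights). The fix is to include the basis in the tag point, $\cO(\mathsf{tag},i,x_i,\theta_i)$, and test $\cO(\mathsf{tag},i,y_i,b)$, as the paper does. A basis mismatch then always lands on a fresh oracle point, and the error is exactly bounded by $n2^{-\lambda}$.
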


In particular, if we want to encode a $\lambda$-bit message in the
one-time memory, with security $2^{-\Omega(\lambda)}$ against
adversaries with at most $2^{O(\lambda)}$ classical queries to the
oracle, $O(\lambda^2)$ qubits of communication are needed; if we
furthermore restrict the adversary to at most $\poly(\lambda)$ oracle
queries, then setting $\ell=\poly\log\lambda$ and
$n=\lambda$ is enough to reach $2^{-\Omega(\lambda)}$ simulation
security.

\subsection{Technical Overview}

\paragraph{The first attempt.}  Recall that a one-time memory allows
the sender to encode two messages $m_0,m_1$ in a token, so that the
receiver choosing a bit $b$ can decode $m_b$ but learns nothing more
about $m_{1-b}$.  As a straightforward attempt, consider encoding two
secret keys $k_0$ and $k_1$ in BB84 states and masking each message
$m_b$ with $h(k_b)$ for a public universal hash function $h$.  
More formally, for every $i\in[\lambda]$ the sender
samples a basis bit $\theta_i\getsr\bit$ and a bit
$x_i\getsr\bit$, and prepares
$\ket{x_i}_{\theta_i}\deq H^{\theta_i} \ket{x_i}$, so that
basis $0$ is the computational (Z) basis and basis $1$ is the Hadamard
(X) basis.

With the basis information the receiver can decode the key and hence
the message.  Let $S_b=\{i\in[\lambda]:\theta_i=b\}$, let $k_b$ be the
concatenation of the bits $x_i$ with $i\in S_b$, and set
$c_b=m_b\oplus h(k_b)$.
The sender sends the quantum registers $\ket{x_i}_{\theta_i}$ together
with the classical data $(S_0,S_1,c_0,c_1)$.  To decode $m_b$, the
receiver measures every register in basis $b$, concatenates the
outcomes with $i\in S_b$ to obtain $k_b$, and outputs
$m_b=c_b\oplus h(k_b)$.  The full na\"ive protocol is described
in~\Cref{alg:naive-semihonest-otm}.

This protocol is secure against semi-honest adversaries: to decode
$m_b$, a semi-honest adversary measures all the states in basis $b$,
which collapses the words with $\theta_i=1-b$ and destroys the
information about the words $x_i$ for all $i\in S_{1-b}$, thereby
consuming the information needed to reveal $m_{1-b}$.

\begin{algorithm}[t]
  \caption{Na\"ive semi-honest one-time memory protocol
  \label{alg:naive-semihonest-otm}}
  \begin{minipage}[t]{0.49\textwidth}
    \textbf{Sender: $\mathsf{Token.Gen}(1^\lambda,m_0,m_1)$}
    \begin{algorithmic}[1]
      \State Sample $x_i\getsr\bit$ and
      $\theta_i\getsr\bit$ for all $i\in[\lambda]$.
      \State Prepare
      \[
        \ket{x_i}_{\theta_i}
        =
        \begin{cases}
          \ket{x_i}, & \theta_i=0,\\
          H\ket{x_i}, & \theta_i=1,
        \end{cases}
      \]
      for every $i\in[\lambda]$.
      \State Set
      \[
        S_0\gets\{i\in[\lambda]:\theta_i=0\},
        \qquad
        S_1\gets\{i\in[\lambda]:\theta_i=1\}.
      \]
      \State For each $b\in\bit$, let
      \[
        k_b\gets (x_i)_{i\in S_b}.
      \]
      \State Set
      \[
        c_b\gets m_b\oplus h(k_b),
        \qquad b\in\bit.
      \]
      \State Let
      \[
        \ket{\psi}\gets
        \bigotimes_{i=1}^{\lambda}\ket{x_i}_{\theta_i}
      \]
      and
      $
        c\gets(S_0,S_1,c_0,c_1).
      $
      \State Output $\ket{\tk}=(\ket{\psi},c)$.
    \end{algorithmic}
  \end{minipage}
  \hfill
  \begin{minipage}[t]{0.49\textwidth}
    \textbf{Receiver: $\mathsf{Token.Eval}(\ket{\tk},b)$}
    \begin{algorithmic}[1]
      \State Parse $\ket{\tk}$ as $\ket{\psi}$ and
      \[
        c=(S_0,S_1,c_0,c_1).
      \]
      \State Measure every register of $\ket{\psi}$ in basis
      \[
        B_b=
        \begin{cases}
          Z,&b=0,\\
          X,&b=1.
        \end{cases}
      \]
      \State Let $x'_1,\ldots,x'_\lambda\in\bit$ be the
      measurement outcomes.
      \State Set
      \[
        k'_b\gets(x'_i)_{i\in S_b}.
      \]
      \State Output
      \[
        m'_b\gets c_b\oplus h(k'_b).
      \]
    \end{algorithmic}
  \end{minipage}
\end{algorithm}

However, this is clearly not secure against malicious adversaries: as
the basis information is sent to the receiver in the clear, the
adversary can simply measure the registers with $i\in S_0$ in the Z
basis and those with $i\in S_1$ in the X basis, and decode both keys
$k_0,k_1$, hence both messages $m_0,m_1$.

\paragraph{Introducing the CAROM.}  Indeed, a more fundamental
impossibility rules out one-time memories in the standard model.
Suppose a quantum adversary can recover $m_b$ with high probability.
Decoding $m_b$ can then be regarded as a measurement on the encoded
state that succeeds with high probability, so by the gentle
measurement lemma~\cite{Win99gentle} it disturbs the state only a
little; consequently, the adversary can decode both $m_0$ and $m_1$
with high probability, violating the security requirement.

To bypass this impossibility, we must force the adversary to
destructively measure the state during decoding.  We do so in an
unstructured idealized model---the classically accessible
random-oracle model (CAROM), in which the adversary can query the
random oracle only on classical inputs.  If we can arrange that the
encoded state must be measured before the basis information is revealed,
one-time security might hold.

But the output of the random oracle is uniformly random and
independent of the basis choices, so how can it be used to encode the
message?  The idea is to \emph{authenticate the correct measurement
  outcomes} with the random-oracle responses, called \emph{tags}.
More precisely, instead of sending the basis information $S_0,S_1$
directly to the receiver, the sender sends the tag
$\tau_i=\cO_{\mathrm{tag}}(i,x_i,\theta_i)$ for every word.  To
recover $m_b$, the receiver measures all the states in basis $b$ and,
for each outcome $x'_i$, checks whether $\cO_{\mathrm{tag}}(i,x'_i,b)$
equals $\tau_i$; the matching words identify the set $S_b$, and then
the receiver computes $k_b$ and decodes $m_b$ with the mask.

But this protocol also admits a straightforward attack.  Since every
word $x_i$ is a \emph{single bit}, the adversary can query
$\cO_{\mathrm{tag}}(i,y,\xi)$ for all $y,\xi\in\bit$ and compare the
responses with $\tau_i$, thereby decoding every $x_i$. In this case,
the CAROM is effectively queried coherently by enumeration. To rule
out this attack, we take the words to be \emph{$\ell$-bit
strings} instead: the sender samples $x_i\getsr\bit^\ell$ uniformly
and prepares
$\ket{x_i}_{\theta_i}=(H^{\theta_i})^{\otimes\ell}\ket{x_i}$, which
defeats enumeration by any adversary making fewer than $2^{O(\ell)}$
queries to $\cO_{\mathrm{tag}}(\cdot)$.

\paragraph{List-recovery security.}  To prove the security of the
one-time memory protocol in the CAROM, it suffices to show that
recovering both $k_0$ and $k_1$---equivalently, all the underlying words
$x_i$---is hard for adversaries with only classical oracle access.

Consider first the single-word setting. The adversary is given the BB84 state
$\ket{x}_{\theta}$ together with the tag
$\tau=\cO_{\mathrm{tag}}(x,\theta)$.  The
security then amounts to showing that the adversary cannot output the
correct $x$ with probability much better than $1/2$---the success
probability of the trivial strategy that guesses the preparation basis
and measures in it.

The key step of the security proof is to reduce the search for $x$ to a
list-decoding problem. Suppose that a quantum algorithm
$\cA^{\cO_{\mathrm{tag}}}$, making at most $q$ classical queries to
$\cO_{\mathrm{tag}}$ and given access to the BB84 state and the tag,
outputs $x$ with high
probability. We show that this implies the existence of an oracle-free algorithm
$\tilde{\cA}$ that, given only the BB84 state and the tag, outputs a
list of at most $q+1$ candidates containing $x$ with high probability.

The reduction is simple: the simulator $\tilde{\cA}$ runs $\cA$ with a
fresh random oracle $\tilde{\cO}_{\mathrm{tag}}$ and records all the
classical queries made by $\cA$ together with its final output. 
If $\cA$ never queries $x$, the simulation is perfect, since
$\cO_{\mathrm{tag}}(x')$ is independent of the tag
$\cO_{\mathrm{tag}}(x)$ for every $x'\neq x$, and the final output of
$\cA$ is then $x$; if $\cA$ does query $x$, that query is recorded as
well. Hence the simulator outputs a list containing $x$ with high
probability.
By a standard information-theoretic argument
(\Cref{lem:bb84-list-recovery-bound}), any algorithm that
list-recovers $x$ solely from $\ket{x}_\theta$ succeeds with
probability at most $\frac12+\negl(\lambda)$, which concludes the
security argument.

The general multi-word search bound can be derived with a parallel interactive proof argument---the security of the search game can be formalized as an interactive proof game. Then the multi-word search problem can be formalized as the parallel version of the interactive proof. By the standard parallel repetition theorem for interactive proofs (see, e.g., equation 4.53
  of~\cite{vidick2016quantum}), we can show that the multi-word search probability is exponentially small.

\subsection{Related Work}

\paragraph{Random-oracle constructions.}
Most closely related is Stambler's construction of simulation-secure
OTMs with classical random-oracle queries, using Wiesner blocks and
conjunction obfuscation~\cite{stambler2026simple}.  Its proof invokes
an extension of the obfuscation guarantee to quantum auxiliary inputs,
formulated as a conjecture. Our construction does not need the
conjunction obfuscation assumption.

\paragraph{Restrictions on quantum operations.}
Yi-Kai Liu's isolated-qubits constructions give information-theoretic
security against restricted one-pass measurements and, subsequently,
single-shot guarantees against general local operations and classical
communication~\cite{ITCS:Liu14,C:Liu14}.  Qipeng Liu instead studies
depth-bounded adversaries, combining Wiesner states with a
post-quantum weak time-lock puzzle~\cite{ITCS:Liu23a}.  Honest
generation and evaluation have constant quantum depth, and an
error-tolerant extension handles independent depolarizing noise at a
constant rate.  Stambler also studies geometrically local,
constant-depth adversaries \cite{stambler2025geometric}.  CAROM places
the restriction on oracle access rather than these quantum operations.


\paragraph{Classical oracle setups.}

Chung, Gerogiou, Lai and Zikas~\cite{CGLZ19} introduce the
classical-oracle one-time memory and constructs one-time memory
relative to a structured classical oracle from quantum disposable
MAC. Stambler's semi-quantum framework uses interactive setup with a
classical sender and an oracle that holds a master secret, decrypts
program ciphertexts, verifies token signatures, and evaluates
programs~\cite{stambler2025semquantum}. The framework also derives
OTMs with one-query simulation security in that setup.  Its oracle
functionality and token-generation protocol differ from our
sender-prepared BB84 token and random-function interface.

Several works construct the one-time memory protocol under the
assumption of stateless hardware, which can be viewed as a special
classical oracle. Broadbent, Gharibian, and Zhou give a
prepare-and-measure construction using trusted stateless hardware
queried classically~\cite{EPRINT:BroGhaZho18}.  Their statistical
security result is in the quantum universal composability framework,
subject to a fixed linear bound on queries in the number of token
qubits.  Stambler explores a different construction from stateless
hardware and quantum random-access codes~\cite{stambler2025qotm}.

All previous works rely on some structured oracle: the oracle
is tailored to verify the one-time memory token. In contrast, we build
one-time memory in the CAROM, which is an unstructured random oracle.

\paragraph{Related applications.}
Beyond the motivating one-time programs and proofs, prior work uses
garbling for secure outsourcing~\cite{AC:BelHoaRog12} and commodity
hardware for limited-use applications~\cite{TCC:EGGJZ22}.  Software
leasing, quantum copy protection, and one-time protection for
randomized programs study related ways to control access to
functionality
\cite{EC:AnaLaP21,C:ALLZZ21,STOC:GLRRV25,EPRINT:GunMov24}.  These
works provide broader context; our analysis concerns the OTM itself
and can be applied to construct one-time programs in the CAROM.

\subsection{Organization}

Section~\ref{sec:preliminaries} introduces the model, OTM definitions,
and quantum interactive-proof tools.  Section~\ref{sec:otm} presents
the construction and proves correctness and simulation security,
including the direct-product hardness bound.

\section{Preliminaries}
\label{sec:preliminaries}

\subsection{Notations}

We write $[n]=\{1,\ldots,n\}$.  The security parameter is
$\lambda$; the number of BB84 words $n=n(\lambda)\geq 1$ and their
length $\ell=\ell(\lambda)\geq 1$ are independent integer-valued
construction parameters.

\subsection{Operator Norm Theory}

\begin{lemma}[Projector sum bound]\label{lem:projector-sum}
  For any projectors $P$ and $Q$, we have
  \[
    \|P+Q\|_\infty \le 1+\|PQ\|_\infty.
  \]
\end{lemma}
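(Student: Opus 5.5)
The plan is to bound the operator norm through the quadratic form. Since $P+Q$ is Hermitian and positive semidefinite, $\|P+Q\|_\infty = \sup_{\|v\|=1} \langle v,(P+Q)v\rangle = \sup_{\|v\|=1}\bigl(\|Pv\|^2+\|Qv\|^2\bigr)$. It therefore suffices to show that for every unit vector $v$,
\[
  \|Pv\|^2+\|Qv\|^2 \le 1+\|PQ\|_\infty .
\]
A cleaner route, which I will take, is to square the operator. Write $c=\|PQ\|_\infty$, and note $\|QP\|_\infty=\|(PQ)^\dagger\|_\infty=c$. Using $P^2=P$ and $Q^2=Q$ gives
\[
  (P+Q)^2 = P+Q+PQ+QP .
\]
If $\mu=\|P+Q\|_\infty$ is the largest eigenvalue of $P+Q$, with unit eigenvector $v$, then
\[
  \mu^2 = \langle v,(P+Q)^2 v\rangle = \mu + 2\,\mathrm{Re}\langle v, PQ v\rangle .
\]

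The key step is to bound $\mathrm{Re}\langle v,PQv\rangle = \mathrm{Re}\langle Pv,Qv\rangle$ in terms of the norms of $Pv$ and $Qv$ rather than by the crude bound $c$. Since $\langle Pv,Qv\rangle=\langle Pv, PQ\,Qv\rangle$, we get
\[
  |\langle Pv,Qv\rangle| \le c\,\|Pv\|\,\|Qv\| \le \frac{c}{2}\bigl(\|Pv\|^2+\|Qv\|^2\bigr) = \frac{c}{2}\,\mu .
\]
Substituting this into the eigenvalue identity gives $\mu^2 \le \mu + c\mu$. Either $\mu=0$, in which case the claim is trivial, or $\mu \le 1+c$, as desired.

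The main obstacle is obtaining the factor $\mu$ in the cross term, so that one factor of $\mu$ can be divided out. The naive bound $|\langle v,PQv\rangle|\le c$ only gives $\mu^2\le \mu+2c$, which is too weak. The fix is the insertion $Pv=P(Pv)$, $Qv=Q(Qv)$ followed by AM--GM, as above. In finite dimensions the eigenvector exists. In general Hilbert spaces I would run the same argument with approximate eigenvectors, taking unit $v_k$ with $\|(P+Q)v_k-\mu v_k\|\to 0$, and pass to the limit.
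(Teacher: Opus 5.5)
Your proof is correct, and its skeleton matches the paper's: take a top unit eigenvector $v$ of $P+Q$, expand $(P+Q)^2=P+Q+PQ+QP$, bound the cross term, and cancel one factor of $\mu$. The two arguments differ exactly at the step you single out as the main obstacle, and there yours is the one that works.

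With $a=\|Pv\|$, $b=\|Qv\|$ and $c=\|PQ\|_\infty$, the paper first asserts $ab=|\langle Pv,Qv\rangle|$; only $ab\ge|\langle Pv,Qv\rangle|$ is true. It then uses only the crude bound $|\langle v,PQv\rangle|\le c$, which gives $\mu^2\le a^2+b^2+2c$. Next it claims $a^2+b^2+2c\le(1+c)(a^2+b^2)$ ``using $a^2+b^2\le 2$''. That inequality is equivalent to $c\,(a^2+b^2-2)\ge 0$, so it actually needs $a^2+b^2\ge 2$. It fails, for instance, for two rank-one projectors with overlap $c\in(0,1)$: there $a^2+b^2=\mu=1+c$, and the claim reduces to $c\le c^2$. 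So the paper's written argument effectively stops at the weak estimate $\mu^2\le\mu+2c$, which you correctly identify as insufficient.

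Your refinement $|\langle Pv,Qv\rangle|=|\langle Pv,PQ\,Qv\rangle|\le c\,ab\le\tfrac{c}{2}(a^2+b^2)=\tfrac{c}{2}\mu$ supplies the missing factor of $\mu$. It is tight in the rank-one example, so your proof repairs this step of the paper rather than restating it.

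A small simplification: your computation holds for every unit vector, not just eigenvectors. It therefore gives the operator inequality $(P+Q)^2\preceq(1+c)(P+Q)$, and hence $\|P+Q\|^2\le(1+c)\|P+Q\|$ directly. This makes the approximate-eigenvector limit in infinite dimensions unnecessary.
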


\begin{proof}
  Since $P$ and $Q$ are projectors, $P+Q$ is positive
  semi-definite. Let $v$ be a unit eigenvector of $P+Q$ with
  eigenvalue $\lambda=\|P+Q\|_\infty$. Then
  \[
    \lambda^2 = \langle v, \left( P+Q \right)^2 v \rangle.
  \]
  Write $a=\|Pv\|$ and $b=\|Qv\|$. Expanding $(P+Q)^2$, we obtain
  \[
    \lambda^2 = \langle v,Pv\rangle+\langle v,Qv\rangle+\langle
    v,PQv\rangle+\langle v,QPv\rangle =
    a^2+b^2+2\operatorname{Re}\langle v,PQv\rangle.
  \]
  By Cauchy-Schwarz,
  \[
    |\langle v,PQv\rangle|\le \|Pv\|\,\|Qv\|=ab.
  \]
  Also,
  \[
    ab=|\langle Pv,Qv\rangle|=|\langle v,PQv\rangle|\le \|PQ\|_\infty.
  \]
  Therefore
  \[
    \lambda^2 \le a^2+b^2+2\|PQ\|_\infty \le
    (1+\|PQ\|_\infty)(a^2+b^2),
  \]
  where the last step uses $a^2+b^2\le 2$ and $\|PQ\|_\infty\le 1$.
  Finally,
  \[
    a^2+b^2=\langle v,(P+Q)v\rangle=\lambda,
  \]
  so
  \[
    \lambda^2\le (1+\|PQ\|_\infty)\lambda.
  \]
  Since $\lambda\ge 0$, this implies
  \[
    \|P+Q\|_\infty=\lambda\le 1+\|PQ\|_\infty.
  \]
\end{proof}

\subsection{Classically Accessible Random Oracle Model}

\begin{definition}[$q$-query CAROM strategy]
  A $q$-query CAROM strategy is an arbitrary quantum strategy that may
  retain ancillary quantum systems and apply arbitrary joint quantum
  operations between oracle calls, but makes at most $q$ adaptive calls
  whose inputs and outputs are classical.  No time, depth, locality, or
  internal-memory bound is imposed by this definition.
\end{definition}

\subsection{Quantum Interactive Proof}

\begin{definition}[Quantum interactive proof system]
  A \emph{quantum interactive proof system} $V$ consists of a
  polynomial-time quantum verifier interacting with an unbounded
  quantum prover for a fixed number of rounds.  The verifier then
  outputs either accept or reject.
\end{definition}

\begin{definition}[Value of a quantum interactive proof system]
  Let $V$ be a quantum interactive proof system. Its \emph{value},
  denoted $\omega(V)$, is the maximum acceptance probability
  achievable by any quantum prover interacting with the verifier of
  $V$.
\end{definition}

\begin{definition}[Parallel repetition]
  Let $V^1,\ldots,V^k$ be quantum interactive proof systems. Their
  \emph{parallel repetition}
  \[
    V^1\otimes\cdots\otimes V^k
  \]
  is the quantum interactive proof system obtained by running all $k$
  verifier procedures in parallel, allowing the prover to use an
  arbitrary joint strategy across the repeated executions, and
  accepting if and only if every verifier instance accepts.
\end{definition}

\begin{lemma}[Perfect parallel repetition, equation 4.53
  of~\cite{vidick2016quantum}]\label{lem:parallel-repetition}
  For any quantum interactive proof systems $V^1,\ldots,V^k$, we have
  \[
    \omega(V^1\otimes V^2\otimes\cdots\otimes V^k) =
    \omega(V^1)\cdots\omega(V^k).
  \]
  Here $\omega(V)$ denotes the value of the proof system $V$, and
  $V^1\otimes\cdots\otimes V^k$ denotes the proof system that runs the
  corresponding verifier procedures in parallel. This multiplicativity
  follows from the perfect parallel repetition theorem for quantum
  interactive proofs.
\end{lemma}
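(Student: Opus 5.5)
The plan is to prove the two inequalities separately, working in the semidefinite-programming (quantum strategy / quantum comb) formalism of Gutoski and Watrous, in which the value of a fixed-round quantum interactive proof is the optimum of an SDP. The easy direction is $\omega(V^1\otimes\cdots\otimes V^k)\ge\prod_j\omega(V^j)$. Take for each $j$ an optimal prover for $V^j$ and let the joint prover run them independently on disjoint registers. By this I mean the tensor product of the strategies, synchronized round by round; if the round counts differ, the shorter protocols are padded with trivial rounds. The verifiers also act independently, so the final joint state is a product state across the $k$ instances. Hence the probability that every instance accepts equals $\prod_j\omega(V^j)$. By induction it suffices to treat $k=2$ throughout.

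For the upper bound, I will represent each verifier $V^j$ as a quantum comb, i.e.\ a channel with memory. Its interaction with an arbitrary prover yields the acceptance probability $\langle Q^j, P\rangle$. Here $Q^j\succeq 0$ is the Choi representation of the verifier followed by its accepting measurement. The prover's strategy $P$ ranges over the convex set $\mathcal S^j$ of $r$-round co-strategy operators. This set is characterized by linear constraints: $P=\rho_r\otimes\cdots$ with nested partial-trace conditions $\operatorname{Tr}_{\mathcal X_i}\rho_i=\rho_{i-1}\otimes I_{\mathcal Y_{i-1}}$ and $\operatorname{Tr}\rho_0=1$. Thus $\omega(V^j)=\max_{P\in\mathcal S^j}\langle Q^j,P\rangle$ is a primal SDP. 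Its dual, as in Gutoski--Watrous, is to minimize a scalar $p$ subject to $Q^j\preceq p\,R$ for $R$ a valid strategy operator of the \emph{dual} type (a comb for the verifier's side). Strong duality holds because strictly feasible points exist: the maximally mixed (normalized identity) combs.

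The key step is then multiplicativity of the dual. Let $(p_1,R_1)$ and $(p_2,R_2)$ be optimal dual solutions. Since $Q^1\preceq p_1R_1$, $Q^2\preceq p_2R_2$ and all four operators are positive semidefinite, we obtain
\[
Q^1\otimes Q^2\preceq p_1p_2\,R_1\otimes R_2 .
\]
This follows by writing $p_1p_2R_1\otimes R_2-Q^1\otimes Q^2=(p_1R_1-Q^1)\otimes p_2R_2+Q^1\otimes(p_2R_2-Q^2)$, a sum of tensor products of PSD operators. Moreover $R_1\otimes R_2$, after the obvious reordering of registers, satisfies the nested partial-trace constraints for the combined protocol, because partial traces of tensor products factor. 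Also, $Q^1\otimes Q^2$ is exactly the acceptance operator of $V^1\otimes V^2$, since acceptance is the conjunction of the two independent accept projectors. Therefore $(p_1p_2,R_1\otimes R_2)$ is dual-feasible for the parallel repetition, and weak duality gives $\omega(V^1\otimes V^2)\le\omega(V^1)\omega(V^2)$.

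The main obstacle I expect is the bookkeeping in the register-reordering step. One must confirm that a tensor product of two $r$-round dual combs is an $r$-round dual comb for the interleaved protocol, in which the prover receives both round-$i$ messages simultaneously. This requires aligning the message spaces $\mathcal X_i=\mathcal X_i^1\otimes\mathcal X_i^2$, $\mathcal Y_i=\mathcal Y_i^1\otimes\mathcal Y_i^2$ and checking each nested constraint factor by factor. It is also where padding to a common number of rounds must be handled. I would verify it carefully for the linear constraints, since positivity is immediate. For the paper's purposes, one can alternatively cite equation 4.53 of~\cite{vidick2016quantum} directly, as the statement does.
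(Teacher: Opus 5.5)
Your proposal is correct and is essentially the argument behind the cited result. The paper gives no proof of its own; it invokes the Gutoski--Watrous perfect parallel repetition theorem (equation 4.53 of Vidick--Watrous), whose proof is the SDP-duality argument you describe: a product of optimal provers for the lower bound, and the tensored dual certificate $Q^1\otimes Q^2\preceq p_1p_2\,R_1\otimes R_2$ together with weak duality for the upper bound. The one ingredient you cite rather than prove is the Gutoski--Watrous identity $\omega(V)=\inf\{p : Q\preceq pR,\ R \text{ a strategy}\}$, and you do not need the dual optimum to be attained, since $\varepsilon$-optimal dual certificates already give the upper bound.
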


\subsection{Cryptography and Random Oracles}


\begin{definition}[Memory Token in the CAROM]
  Let $\lambda \in \mathbb{N}$ be the security parameter, and let
  \(\cO\) be a classically accessible random oracle with output domain
  \(\bit^{\lambda}\).  A \emph{memory token scheme} in the CAROM is a
  pair of quantum algorithms
  $(\mathsf{Token.Gen}^{\cO},\mathsf{Token.Eval}^{\cO})$ satisfying
  the following.

  \begin{enumerate}
  \item \textbf{Syntax.}
    \begin{enumerate}
    \item
      $\ket{\mathsf{tk}} \leftarrow
      \mathsf{Token.Gen}^{\cO}(1^\lambda,m_0,m_1)$ is a quantum
      algorithm that takes as input a security parameter and two
      messages $m_0,m_1 \in \bit^{\lambda}$, and outputs a quantum
      state $\ket{\mathsf{tk}}$, called a \emph{memory token}.
    \item
      $m \leftarrow \mathsf{Token.Eval}^{\cO}(\ket{\mathsf{tk}},b)$ is
      a quantum algorithm that takes as input a quantum state
      $\ket{\mathsf{tk}}$ and a bit $b \in \bit$, and outputs a
      message $m$.
    \end{enumerate}

  \item \textbf{Completeness.}  For every security parameter
    $\lambda$, every pair of messages $m_0,m_1 \in \bit^{\lambda}$,
    and every bit $b \in \bit$,
    \[
      \Pr\left[ \mathsf{Token.Eval}^{\cO}(\ket{\mathsf{tk}},b)=m_b :
        \ket{\mathsf{tk}} \leftarrow
        \mathsf{Token.Gen}^{\cO}(1^\lambda,m_0,m_1) \right] = 1 -
      \negl(\lambda),
    \]
    where the probability is over the choice of \(\cO\), the internal
    randomness of both algorithms and all measurement outcomes.

  \item \textbf{Efficiency.}  If $n$ and $\ell$ are polynomial in
    $\lambda$, both algorithms run in polynomial time.
  \end{enumerate}
\end{definition}

We now define simulation security for OTM in the CAROM.

\begin{definition}[One-Time Memory with Query-Bounded Simulation Security]
  A memory token scheme
  \[
    (\mathsf{Token.Gen}^{\cO},\mathsf{Token.Eval}^{\cO})
  \]
  is called a \emph{one-time memory in the CAROM with simulation
    security} if, in addition to the properties above, it satisfies
  the following security condition for the stated parameter-dependent
  error bound $\varepsilon(\lambda,n,\ell,q)$.

  For every $q$-query CAROM strategy $\cA$, there exists a simulator
  $\mathsf{Sim}$ such that, for every security parameter $\lambda$,
  every pair of messages $m_0,m_1 \in \bit^{\lambda}$,
  \[
    \left| \Pr\left[
        \cA^{\cO}(\mathsf{Token.Gen}^{\cO}(1^\lambda,m_0,m_1)) =1
      \right] - \Pr\left[ \cA^{\cO_{\Sim}}\left(\Sim^{g^{m_0, m_1}}
          \left( 1^{\lambda} \right)\right)=1 \right] \right| \le
    \varepsilon(\lambda,n,\ell,q),
  \]
  where $\Sim$ makes at most one classical query to
  \(g^{m_0, m_1} : b \mapsto m_b\), \(\cO_{\Sim}\) is the simulated
  oracle controlled by \(\Sim\), and the probability is over the choice
  of the uniformly random oracle $\cO$, the internal randomness of all
  algorithms, and all measurement outcomes.
  The quantitative error need not be negligible for arbitrary parameter
  choices; the construction below gives its explicit value.  When
  $\cA$, $n$, $\ell$, and $q$ are polynomial in $\lambda$, the simulator
  is efficient and this definition implies the usual computational CAROM
  statement.
\end{definition}

\section{One-time Memory in the CAROM}
\label{sec:otm}

In this section we construct and prove the OTM described in the
introduction, working throughout with $n$ BB84 words of length $\ell$,
where $n=n(\lambda)\geq 1$ and $\ell=\ell(\lambda)\geq 1$ are
functions of $\lambda$, and the two messages remain in
$\bit^\lambda$.  The token contains $n\ell$ qubits and
$n\lambda+2\lambda$ classical bits.  The correctness error is at most
$n2^{-\lambda}$.  The security analysis gives the explicit
query-dependent bound in~\Cref{thm:master-thm-for-sec}; it is
information-theoretic in the internal quantum strategy.  Generation
and evaluation are efficient when $n$ and $\ell$ are polynomial in
$\lambda$, and the simulator has polynomial overhead for
polynomial-time adversaries with polynomial $q$.

\subsection{Construction}

We use two domain-separated restrictions of one uniformly random oracle:
\[
  \cO_{\mathrm{tag}}:[n]\times\bit^\ell\times\bit\to\bit^\lambda,
  \qquad
  \cO_{\mathrm{mask}}:[n]\times\bit^\ell\times\bit^n\to\bit^\lambda.
\]
Equivalently, the two restrictions may be sampled as independent random
functions.  The first restriction produces public tags, while the second
produces the message masks.  Write
$\theta=(\theta_1,\ldots,\theta_n)$ and
$S_b=\{i\in[n]:\theta_i=b\}$.  All XORs below are bit-wise, and the XOR
of an empty set is $0$.

For a basis bit $\beta$, write
\[
  \ket{x}_{\beta}:=(H^{\beta})^{\otimes\ell}\ket{x},
\]
so basis $0$ is the computational (Z) basis and basis $1$ is the
Hadamard (X) basis.  The protocol is given in~\Cref{alg:otm-protocol}.
The sender encodes independent words in random bases, authenticates each
word with a tag, and masks each message with the XOR of the mask-oracle
values indexed by the corresponding words and the full basis pattern.
The receiver measures all registers in the basis determined by its
selection bit and tests each outcome against the public tags; these
checks recover the full basis pattern, enabling the receiver to
reconstruct the selected mask and recover the chosen message, except
with the error probability bounded in~\Cref{thm:otm-correctness}.

\begin{algorithm}[t]
  \caption{One-time memory protocol\label{alg:otm-protocol}}
  \begin{minipage}[t]{0.49\textwidth}
    \textbf{Sender: $\mathsf{Token.Gen}^{\cO}(1^\lambda,m_0,m_1)$}
    \begin{algorithmic}[1]
      \State Sample $x_i\getsr\bit^{\ell}$ and
      $\theta_i\getsr\bit$ for all $i\in[n]$.
      \State Prepare $\ket{x_i}_{\theta_i}
        =(H^{\theta_i})^{\otimes\ell}\ket{x_i}$ for every $i\in[n]$.
      \State Set
      \[
        y_b\gets\bigoplus_{i\in S_b}
          \cO_{\mathrm{mask}}(i,x_i,\theta_i),
        \qquad b\in\bit.
      \]
      \State Set $c_b\gets m_b\oplus y_b$ for $b\in\bit$.
      \State Set
      $\tau_i\gets\cO_{\mathrm{tag}}(i,x_i,\theta_i)$ for all $i\in[n]$.
      \State Let
      $\ket{\psi}\gets\bigotimes_{i=1}^{n}\ket{x_i}_{\theta_i}$
      and $c\gets((\tau_i)_{i=1}^{n},c_0,c_1)$.
      \State Output $\ket{\tk}=(\ket{\psi},c)$.
    \end{algorithmic}
  \end{minipage}
  \hfill
  \begin{minipage}[t]{0.49\textwidth}
    \textbf{Receiver: $\mathsf{Token.Eval}^{\cO}(\ket{\tk},b)$}
    \begin{algorithmic}[1]
      \State Parse $\ket{\tk}$ as $\ket{\psi}$ and
      $c=((\tau_i)_{i=1}^{n},c_0,c_1)$.
      \State Measure every register in basis $b$: the computational
      (Z) basis for $b=0$ and the Hadamard (X) basis for $b=1$.
      \State Let $x'_1,\ldots,x'_n\in\bit^{\ell}$ be the outcomes.
      \State Set
      \[
        S\gets\{i\in[n]:\cO_{\mathrm{tag}}(i,x'_i,b)=\tau_i\}
      \]
      and set
      \[
        \theta'_i=\begin{cases}
          b,&i\in S,\\
          1-b,&\text{otherwise}.
        \end{cases}
      \]
      \State Output
      \[
        c_b\oplus\bigoplus_{i\in S}
        \cO_{\mathrm{mask}}(i,x'_i,\theta').
      \]
    \end{algorithmic}
  \end{minipage}
\end{algorithm}

\begin{theorem}[Correctness]
  \label{thm:otm-correctness}
  For every $m_0,m_1\in\bit^\lambda$ and every $b\in\bit$,
  \[
    \Pr\left[\mathsf{Token.Eval}^{\cO}\left(
      \mathsf{Token.Gen}^{\cO}(1^\lambda,m_0,m_1),b\right)=m_b\right]
    \geq 1-n2^{-\lambda}.
  \]
\end{theorem}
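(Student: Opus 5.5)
The plan is to identify a single bad event that covers every way evaluation can fail, and then bound that event by a union bound over the $n$ registers. Fix $m_0,m_1$ and $b$.

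First, I check the registers the receiver measures in the correct basis. For $i\in S_b$ the state is $\ket{x_i}_b$, so measuring all $\ell$ qubits in basis $b$ returns $x'_i=x_i$ with certainty. Hence $\cO_{\mathrm{tag}}(i,x'_i,b)=\cO_{\mathrm{tag}}(i,x_i,\theta_i)=\tau_i$, and $i\in S$. So $S_b\subseteq S$ always holds.

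Second, I handle the registers with $\theta_i=1-b$. The tag $\tau_i$ is the oracle value at the point $(i,x_i,1-b)$. The receiver compares it with the value at $(i,x'_i,b)$, which is a different input because the last coordinate differs. Since $\cO_{\mathrm{tag}}$ is uniformly random, its value at $(i,x'_i,b)$ is uniform on $\bit^\lambda$. This value is independent of $\tau_i$ and of the measurement outcome $x'_i$, because the outcome depends only on the prepared state and not on the oracle. Therefore $\Pr[i\in S]=2^{-\lambda}$ for each such $i$. Some care is needed here: condition on $x_i,\theta_i$ and the outcome, and note that distinct indices $i$ give distinct oracle inputs, so the generation step never queries $(i,x'_i,b)$. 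By a union bound over at most $n$ indices, $\Pr[S\neq S_b]\le n2^{-\lambda}$.

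Third, I show that on the good event $S=S_b$ the output is exactly $m_b$. On this event $\theta'_i=b$ exactly when $\theta_i=b$, so $\theta'=\theta$. Also $x'_i=x_i$ for all $i\in S$. The receiver's mask is then $\bigoplus_{i\in S_b}\cO_{\mathrm{mask}}(i,x_i,\theta)=y_b$. Here I read the third argument of $\cO_{\mathrm{mask}}$ in the sender's step as the full pattern $\theta$, matching the oracle's domain $\bit^n$. The output is $c_b\oplus y_b=m_b$.

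The only delicate step is the independence argument in the second step, namely that the tag collision for wrong-basis registers is a fresh uniform value. Everything else is direct substitution.
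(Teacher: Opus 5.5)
Your proposal is correct and matches the paper's proof step for step. Registers with $\theta_i=b$ always pass the tag test. Each register with $\theta_i=1-b$ falsely passes with probability $2^{-\lambda}$, because $(i,x'_i,b)$ is a different oracle point from $(i,x_i,\theta_i)$. A union bound gives the error, and on the event $S=S_b$ you have $\theta'=\theta$, so the mask cancels to give $m_b$. Your reading of the sender's mask argument as the full pattern $\theta$ is also the one the paper's proof uses.
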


\begin{proof}
  Fix $b\in\bit$.  If $\theta_i=b$, measurement in basis $b$ recovers
  $x_i$ exactly, so the tag test accepts.  If $\theta_i=1-b$, the tag
  query $(i,x'_i,b)$ differs from the sender's tag point
  $(i,x_i,\theta_i)$ in the basis input, and hence matches $\tau_i$
  with probability $2^{-\lambda}$. A union bound
  gives the stated error.  When no false match occurs, $S=S_b$,
  $\theta'=\theta$, and $x'_i=x_i$ for every $i\in S$.  The output is
  therefore

  \[
    c_b\oplus\bigoplus_{i\in S_b}\cO_{\mathrm{mask}}(i,x_i,\theta)
    =(m_b\oplus y_b)\oplus y_b=m_b.
  \]
\end{proof}

\subsection{Security events and the full-pattern condition}

Intuitively, the message $m_{1-b}$ remains secret as long as some word
$x_i$ with $\theta_i=1-b$ has not been revealed to the adversary: the
corresponding mask contribution is then a fresh uniform value that
perfectly hides the message.  The events below make this precise.

For each $i\in[n]$, let
\[
  M_i=(i,x_i,\theta)
\]
be the correct mask point.  Let $C_b$ be the event that the adversary's
classical transcript contains every $M_i$ with $i\in S_b$.  To express the
local information needed by this event, let $F_i$ be the event that the
transcript contains either a tag query $(i,x_i,\theta_i)$ or a mask
query $(i,x_i,\Theta)$ with $\Theta_i=\theta_i$.  Thus every exact
full-pattern mask query for word $i$ implies $F_i$.

The key implication is
\begin{equation}
  C_0\wedge C_1\ \Longrightarrow\ \bigcap_{i=1}^{n}F_i
  \qquad\text{whenever }S_0,S_1\neq\varnothing.
  \label{eq:full-pattern-implies-local}
\end{equation}
If one local event $F_i$ fails, at least one mask contribution remains
unqueried, so the corresponding XOR is a uniform one-time pad even if all
other words have been recovered.  Partial local recovery therefore does
not need a separate leakage analysis.  The exceptional event
\[
  E_{\mathrm{empty}}=\{S_0=\varnothing\}\cup\{S_1=\varnothing\}
\]
has probability $2^{1-n}$.

\subsection{Query-bounded recovery}
\label{subsec:proof-of-hardness}

The core difficulty in bounding~\eqref{eq:full-pattern-implies-local}
is that the oracle queries and the published ciphertexts $c_0,c_1$
correlate all words.  We therefore proceed in two steps.  First, we
define an idealized single-word game in which the oracle is replaced by
an equality check and all side information is removed, and bound its
success probability via a list-recovery argument.  Second, we reduce
completion of both branches in the real experiment to the $n$-fold
parallel repetition of this game, losing a factor of $n+1$ for the
unknown size of the second basis class.

The single-word game records the local information obtainable through
equality tests.
The adversary may otherwise apply an arbitrary quantum strategy.

\begin{algorithm}[t]
  \caption{Single-word recovery game with equality-test oracle
    $\cG_{\mathrm{single}}(q,\lambda,\ell)$\label{alg:sec-game-single}}
  \begin{algorithmic}[1]
    \State Sample $x\getsr\bit^\ell$ and $\theta\getsr\bit$ and send
    $\ket{x}_{\theta}=(H^\theta)^{\otimes\ell}\ket{x}$ to the adversary.
    \State For at most $q$ classical queries $(y,\xi)$, return $1$ if
    $y=x$ and $\xi=\theta$ and $0$ otherwise.
    \State The adversary outputs $z\in\bit^\ell$ and wins iff it either
    queried $(x,\theta)$ or outputs $z=x$.
  \end{algorithmic}
\end{algorithm}

\begin{lemma}[List-recovery reduction]
  \label{lem:list-recovery-reduction}
  Let $\cA$ win $\cG_{\mathrm{single}}(q,\lambda,\ell)$ with probability
  at least $\varepsilon$.  Then there exists an oracle-free quantum
  strategy that, receiving
  only $\ket{x}_{\theta}$, outputs a list $L\subseteq\bit^\ell$ of size at
  most $q+1$ such that $\Pr[x\in L]\geq\varepsilon$.
\end{lemma}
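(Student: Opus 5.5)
The plan is to build the oracle-free strategy $\tilde{\cA}$ by running $\cA$ on the received state $\ket{x}_\theta$ and answering every classical query $(y,\xi)$ with $0$. While doing so, $\tilde{\cA}$ records the first component $y$ of each query, and at the end it appends $\cA$'s final output $z$. The resulting list $L=\{y^{(1)},\ldots,y^{(q')},z\}$ has $q'\le q$, so $|L|\le q+1$. This step is legitimate because $\cA$'s queries are classical by assumption: recording them does not disturb $\cA$'s quantum state, and answering them requires no access to $x$ or $\theta$.

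The key step is a coupling between the real game and the simulation. I would run both on the same initial state and with the same internal measurement outcomes of $\cA$. The oracle in $\cG_{\mathrm{single}}$ returns $1$ only on the single point $(x,\theta)$. So as long as $\cA$ has not queried $(x,\theta)$, every real answer is $0$ and coincides with the simulated answer. Let $Q$ be the event that $\cA$ ever queries $(x,\theta)$.

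I would formalize the coupling by considering the joint distribution of the partial transcripts up to the first query equal to $(x,\theta)$. This distribution is identical in the real and simulated runs, because the state evolution and every answer agree up to that point. It follows that the probability of $Q$ is the same in both runs. Moreover, on the complement of $Q$ the two runs are identical through the final output, so the distribution of $z$ restricted to $\neg Q$ also coincides.

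Now suppose the real $\cA$ wins. There are two cases. If $Q$ occurs in the real run, the same first hitting query occurs in the simulation, so $x$ was recorded and $x\in L$. If $Q$ does not occur, winning means $z=x$, and by the coupling the simulated run outputs the same $z$, so again $x\in L$. Combining the two cases gives $\Pr[x\in L]\ge\Pr[\cA\text{ wins}]\ge\varepsilon$.

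The main subtlety is making the coupling rigorous for a quantum adversary with adaptive classical queries. I would handle it by modeling $\cA$ as a sequence of instruments that produce classical query strings, and then proving by induction on the query index that the subnormalized state conditioned on "no hit so far" is identical in the real and simulated runs. This induction is what justifies both case claims above.
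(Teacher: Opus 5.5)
Your proposal is correct and follows the paper's proof. Both answer every equality-test query with $0$, record the first component of each query together with $\cA$'s final output, and note that the simulated run agrees with the real game up to the first query to $(x,\theta)$; your coupling, made rigorous by induction on the subnormalized ``no hit yet'' states, spells out the paper's one-line justification.
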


\begin{proof}
  Run $\cA$, answer $0$ for every equality-test oracle query, and
  output the first component of each query together with $\cA$'s final
  output.  Until the first correct query, the simulated answers are
  distributed exactly as in the game, so every successful query or
  final output appears in the list.
\end{proof}

\begin{lemma}[BB84 list-recovery bound]
  \label{lem:bb84-list-recovery-bound}
  Let a quantum strategy receive only $\ket{x}_{\theta}$ and output a list
  of at most $L\leq 2^\ell$ candidates.  Then
  \[
    \Pr[x\in L]\leq \frac12+\frac{L}{2^{\ell/2+1}}.
  \]
\end{lemma}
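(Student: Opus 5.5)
The plan is to write the success probability as the average of two projector overlaps and bound the resulting operator norm with \Cref{lem:projector-sum}.

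\textbf{Reduction to a list-valued POVM.} Any oracle-free strategy that receives $\ket{x}_\theta$ and outputs a list of at most $L$ candidates is a POVM $\{N_T\}$ indexed by subsets $T\subseteq\bit^\ell$ with $|T|\le L$; ancillas and post-processing are absorbed by Naimark dilation. The success probability is
\[
  \Pr[x\in L]=\frac{1}{2\cdot 2^{\ell}}\sum_{\theta\in\bit}\sum_{x}\sum_{T\ni x}
  \bra{x}_\theta N_T\ket{x}_\theta
  =\frac{1}{2^{\ell+1}}\sum_T \operatorname{Tr}\!\left[N_T\,(P^0_T+P^1_T)\right],
\]
where $P^\beta_T=\sum_{x\in T}\ket{x}_\beta\!\bra{x}_\beta$ is a projector of rank $|T|$. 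I then bound each trace by $\operatorname{Tr}[N_T]\,\|P^0_T+P^1_T\|_\infty$ and use $\sum_T N_T=I$, so $\sum_T\operatorname{Tr}[N_T]=2^\ell$. This gives
\[
  \Pr[x\in L]\le \frac12\max_{|T|\le L}\|P^0_T+P^1_T\|_\infty .
\]

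\textbf{Operator-norm bound.} By \Cref{lem:projector-sum}, $\|P^0_T+P^1_T\|_\infty\le 1+\|P^0_TP^1_T\|_\infty$. To control the product I bound the operator norm by the Frobenius norm:
\[
  \|P^0_TP^1_T\|_\infty^2\le\|P^0_TP^1_T\|_F^2=\sum_{x,x'\in T}\left|\braket{x|H^{\otimes\ell}|x'}\right|^2=\frac{|T|^2}{2^\ell}.
\]
Hence $\|P^0_TP^1_T\|_\infty\le |T|\,2^{-\ell/2}\le L\,2^{-\ell/2}$. Substituting back gives $\Pr[x\in L]\le \frac12+\frac{L}{2^{\ell/2+1}}$, exactly as claimed.

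\textbf{Expected obstacle.} The main point needing care is the first step. One must justify that an arbitrary strategy, possibly with randomized list output and workspace, is captured by such a POVM on the $\ell$-qubit register alone. One must also make sure the inequality $\operatorname{Tr}[N_T A]\le \operatorname{Tr}[N_T]\|A\|_\infty$ is applied with $A=P^0_T+P^1_T\succeq 0$, which holds since $N_T\succeq0$. The Frobenius estimate is crude but it is exactly what produces the stated linear dependence on $L$. The hypothesis $L\le 2^\ell$ is only needed so that lists are well defined; the bound itself is trivial for $L\ge 2^{\ell/2}$.
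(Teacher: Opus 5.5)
Your proposal is correct and is essentially the paper's proof: you average over $\theta$, write the success probability as $2^{-\ell}\sum_T\operatorname{Tr}[N_T K_T]$ with $K_T=\frac12(P_T+H^{\otimes\ell}P_TH^{\otimes\ell})$, bound $\|K_T\|_\infty$ via \Cref{lem:projector-sum} and a Frobenius-norm estimate of $|T|\,2^{-\ell/2}$, and then sum over the POVM effects, spelling out the normalization the paper leaves implicit. One small terminology point: folding ancillas, workspace and randomized post-processing into a POVM on the input register is the Heisenberg-picture (partial-trace) reduction, not Naimark dilation, which goes in the opposite direction---but the fact you need is standard and true.
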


\begin{proof}
  Average over the hidden basis and write
  \[
    \rho_x=\frac12\ket{x}\bra{x}+
      \frac12H^{\otimes\ell}\ket{x}\bra{x}H^{\otimes\ell}.
  \]
  For a fixed list $S$ of size $L$, let
  $P_S=\sum_{x\in S}\ket{x}\bra{x}$.  The corresponding contribution is
  \[
    K_S=\frac12\left(P_S+H^{\otimes\ell}P_SH^{\otimes\ell}\right).
  \]
  The projector-sum bound (\Cref{lem:projector-sum}) and Schatten
  norms give
  \[
    \|K_S\|_\infty
      \leq\frac12\left(1+\|P_SH^{\otimes\ell}P_S\|_\infty\right)
      \leq\frac12\left(1+L2^{-\ell/2}\right).
  \]
  Summing the POVM effects over all list outcomes proves the claim.
\end{proof}

The bound is essentially tight: for $\ell=1$ and $L=1$, the optimal
(Helstrom) measurement discriminating the two possible states succeeds
with probability $\frac12+\frac{1}{2\sqrt2}$, so the $2^{-\ell/2}$
scaling cannot be improved in general.

Define
\[
  p(q,\ell):=\frac12+\frac{q+1}{2^{\ell/2+1}}.
\]
The list-recovery reduction and the BB84 bound imply that the success
probability of one local game is at most $p(q,\ell)$.  Each local game
can be viewed as a $q$-round quantum interactive proof system: the
  verifier samples $x,\theta$, sends $\ket{x}_{\theta}$, answers
  equality queries, and accepts iff the adversary outputs $x$.  The
  $n$-fold
local game is then the parallel repetition of $n$ independent copies
of this proof system, against an adversary that may use an arbitrary
joint strategy.  Perfect parallel repetition
(\Cref{lem:parallel-repetition}), while allowing the adversary $q$
queries in every coordinate (a relaxation of its total query budget),
gives
\begin{equation}
  \Pr[\text{all $n$ local tests succeed}]
  \leq p(q,\ell)^n.
  \label{eq:local-direct-product}
\end{equation}
Here the probability is taken in this independent product game, where
the adversary may use an arbitrary joint quantum strategy across the
$n$ coordinates.  The real experiment is embedded into this game by
the reduction of~\Cref{lem:direct-product-hardness} below: each
classical oracle query of the adversary is forwarded to one
coordinate, so granting $q$ queries per coordinate is a valid
relaxation of the total budget $q$, and no product assumption on the
  adversary's behavior is needed.  The bound is information-theoretic;
  no additional restriction on the adversary is needed.

\begin{lemma}[Full-pattern completion bound]
  \label{lem:direct-product-hardness}
  Fix messages $m_0,m_1\in\bit^\lambda$ and consider the following
  \emph{random-pad experiment}.  The challenger samples
  $x_i\getsr\bit^\ell$ and $\theta_i\getsr\bit$, prepares the token
  state $\bigotimes_{i=1}^{n}\ket{x_i}_{\theta_i}$, and samples
  independent uniform $\tau_i,c_0,c_1\in\bit^\lambda$.  It sets
  $\cO_{\mathrm{tag}}(i,x_i,\theta_i)=\tau_i$ for every $i\in[n]$, and
  for each branch $b\in\bit$ with $S_b\neq\varnothing$ it samples the
  outputs of $\cO_{\mathrm{mask}}$ at the exact mask points
  $\{M_i:i\in S_b\}$ uniformly subject to
  \[
    \bigoplus_{i\in S_b}\cO_{\mathrm{mask}}(i,x_i,\theta)
      =c_b\oplus m_b,
  \]
  while all remaining outputs of both oracles are uniform and
  independent.  The adversary receives the BB84 registers together
  with the classical values $(\tau_i)_{i=1}^{n},c_0,c_1$ and arbitrary
  auxiliary input, and may make at most $q$ adaptive classical queries
  to $\cO_{\mathrm{tag}}$ and $\cO_{\mathrm{mask}}$, with arbitrary
  quantum computation between queries.  Then
  \[
    \Pr[C_0\wedge C_1]\le (n+1)\,p(q,\ell)^n,
  \]
  where the probability is over the sampling above and the adversary's
  strategy.
\end{lemma}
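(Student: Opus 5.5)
We bound $\Pr[C_0\wedge C_1]$ by splitting on the event $E_{\mathrm{empty}}$ and otherwise embedding the random-pad experiment into the $n$-fold parallel repetition of $\cG_{\mathrm{single}}(q,\lambda,\ell)$. When $S_0$ and $S_1$ are both nonempty, \eqref{eq:full-pattern-implies-local} gives $C_0\wedge C_1\Rightarrow\bigcap_i F_i$, so it suffices to bound the probability that every local event $F_i$ occurs. When $S_b=\varnothing$ for some $b$, $C_b$ holds vacuously. We absorb this case together with the guessing of the basis-class size, which is where the factor $n+1$ comes from.

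The main step is a reduction $\cB$ that plays the $n$ parallel copies of $\cG_{\mathrm{single}}$ and simulates the random-pad experiment for the adversary $\cA$. First, $\cB$ guesses $k=|S_1|\in\{0,\ldots,n\}$ uniformly, or more carefully guesses only what it needs to consistently program the two XOR constraints. It receives the $n$ BB84 registers from the $n$ challengers and samples $\tau_i,c_0,c_1$ uniformly, which matches their distribution in the experiment. It then answers $\cA$'s classical queries lazily with fresh uniform values, and forwards each query to the relevant coordinate's equality oracle:
\begin{itemize}
\item a tag query $(i,y,\xi)$ is forwarded to coordinate $i$ as $(y,\xi)$;
\item a mask query $(i,y,\Theta)$ is forwarded to coordinate $i$ as $(y,\Theta_i)$.
\end{itemize}
If the equality check answers $1$, the query is an exact tag or mask point, and $\cB$ programs it consistently. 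Tag points return $\tau_i$. Mask points return a value subject to the XOR constraint for their branch, which $\cB$ can maintain once it knows the full pattern and which points of $S_b$ have already been revealed. Since each real query touches one coordinate, the per-coordinate budget $q$ relaxes the total budget. Whenever $F_i$ occurs, coordinate $i$ receives a correct equality query and therefore wins. By \eqref{eq:local-direct-product}, applied after conditioning on a correct guess, we get $\Pr[\bigcap F_i]\le p(q,\ell)^n$. Paying $n+1$ for the guess then yields the stated bound.

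The hard part is perfect simulation of the correlated mask outputs. The XOR constraint links the mask points of $S_b$ through $c_b\oplus m_b$, and the simulator does not know $\theta$ in advance. It learns $\theta_i$ only when an equality query hits, but it needs the full $\theta$ to know which queries are exact mask points $M_i$.

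To handle this, I would first argue that any query $(i,x_i,\Theta)$ with $\Theta_i=\theta_i$ but $\Theta\ne\theta$ is not a mask point, so $\cB$ may answer it uniformly. The constraint is only nontrivial for the last unrevealed point of a branch. Before that point is queried, all revealed values are jointly uniform and the simulation is exact. The last point of branch $b$ becomes determined only once every other $M_j$ with $j\in S_b$ has been queried, and by then $\cB$ knows the relevant $\theta_j$.

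What $\cB$ still lacks is the membership of unqueried coordinates, i.e.\ the size of the remaining basis class. This is exactly the information the guess of $k$ supplies. If the exact-equality-game framing does not directly accommodate this, the fallback is to note that the first moment $C_0\wedge C_1$ becomes possible already forces all $F_i$. One can then stop the simulation at the first query that would require programming a constrained value. Up to that point the view is identical to the fully uniform oracle, and the stopping event still implies that all-but-one, or all, coordinates have won, with the $n+1$ factor covering which coordinate is last.
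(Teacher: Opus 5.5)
Your main route is the paper's proof: guess $k=|S_1|\in\{0,\ldots,n\}$, forward tag queries $(i,y,\xi)$ and projected mask queries $(i,y,\Theta)\mapsto(y,\Theta_i)$ to the $i$-th equality oracle, answer lazily while programming only a branch-completing point, then apply perfect parallel repetition and the list-recovery bound to each fixed-$k$ reduction and sum over the $n+1$ values of $k$ (summing is the correct accounting, not ``conditioning on a correct guess''). The guess does suffice, with no hedge needed: the paper programs only patterns of weight $k$ and closes branch $1$ (resp.\ $0$) of $\Theta$ at its $k$-th (resp.\ $(n-k)$-th) distinct passing point, which certifies $\Theta=\theta$ because any other weight-$k$ pattern passes on at most $k-1$ (resp.\ $n-k-1$) points of that branch; your fallback, by contrast, does not work and should be dropped, since stopping at the first constrained query only certifies $F_i$ for the coordinates of the first completed branch (not all but one coordinate) and says nothing about the second completion that $C_0\wedge C_1$ requires.
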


\begin{proof}
  We prove the bound in three steps: we first reduce the random-pad
  experiment to an \emph{equality-test experiment}, in which the
  adversary receives the BB84 registers
  $\bigotimes_{i=1}^{n}\ket{x_i}_{\theta_i}$ and, for each word $i$,
  an equality-test oracle that answers a query $(y,\xi)$ by whether
  $(y,\xi)=(x_i,\theta_i)$; the adversary wins if it tests the correct
  pair for every $i$.  We then view this
  experiment as the parallel repetition of $n$ independent single-word
  local games, each with its own independent equality-test oracle, and
  finally bound each local game separately with the list-recovery
  reduction.

  \emph{Step 1: reducing to the equality-test experiment.}  The
  reduction holds the product state
  $\bigotimes_{i=1}^{n}\ket{x_i}_{\theta_i}$, samples independent
  uniform $\tau_1,\ldots,\tau_n$ and $c_0,c_1$, guesses $k=|S_1|$
  uniformly from $0,1,\ldots,n$, and runs the adversary of the
  random-pad experiment on the state
  $\bigotimes_{i=1}^{n}\ket{x_i}_{\theta_i}$ and the values
  $\tau_1,\ldots,\tau_n,c_0,c_1$, answering its oracle queries as
  follows.

  \emph{Tag queries as equality tests.}  In the experiment a tag
  response is independently uniform at every point other than
  $(i,x_i,\theta_i)$; hence the only information a tag query can yield
  about $(x_i,\theta_i)$ is whether the queried pair is correct.  The
  reduction exploits exactly this structure: for a tag query
  $(i,y,\xi)$ it forwards $(y,\xi)$ to the $i$-th local game, whose
  answer is an equality test, and returns $\tau_i$ on a positive test
  and a fresh uniform string otherwise.  This reproduces the
  distribution of $\cO_{\mathrm{tag}}$ exactly.

  \emph{Mask queries.}  For a mask query $(i,y,\Theta)$ the reduction
  forwards the projected pair $(y,\Theta_i)$ to the $i$-th local game.
  If the test fails, the queried point is not an exact mask point, and
  the reduction answers with a fresh uniform value, recorded for
  consistency.  If the test passes, the reduction records the query
  with its full pattern $\Theta$ and answers as follows.

  Only patterns with exactly $k$ ones can carry the XOR constraints,
  since under the correct guess $\theta$ itself has $k$ ones; a query
  whose pattern has any other Hamming weight receives a fresh uniform
  value.  For a pattern $\Theta$ with $k$ ones, the reduction counts
  the distinct passing points $(i,x_i,\Theta)$, separately for the two
  branches $\Theta_i=1$ and $\Theta_i=0$.  On the $k$-th passing point
  with $\Theta_i=1$ it returns the value that closes
  \[
    \bigoplus_{j:\Theta_j=1}\cO_{\mathrm{mask}}(j,x_j,\Theta)
      =c_1\oplus m_1,
  \]
  and on the $(n-k)$-th passing point with $\Theta_i=0$ it closes the
  branch-$0$ constraint symmetrically; every other passing point
  receives a fresh uniform value, recorded for consistency.  For any
  pattern other than the true $\theta$, the tests can pass on at most
  $k-1$ points of the $1$-branch and at most $n-k-1$ points of the
  $0$-branch, so the constraints are only ever closed for
  $\Theta=\theta$.

  The local games certify only the projected pair; the full pattern of
  each query is kept in the reduction's record.  The one quantity the
  reduction cannot learn from the games is the number of ones in
  $\theta$, which is why it guesses $k$.  Conditioned on the correct
  guess, this lazy sampling is distributed exactly as the random-pad
  experiment.  On the event $C_0\wedge C_1$ the adversary has queried
  the exact mask point $M_i$ for every $i\in S_0\cup S_1=[n]$, so the
  forwarded test $(x_i,\theta_i)$ passes in every coordinate and the
  reduction wins the equality-test experiment.  As the correct guess
  is made with probability $1/(n+1)$, we obtain
  \[
    \Pr[\text{the reduction wins the equality-test experiment}]\ge
      \frac{\Pr[C_0\wedge C_1]}{n+1}.
  \]

  \emph{Step 2: to the product of local recoveries.}  The
  equality-test experiment is exactly the parallel repetition of the
  $n$ independent single-word games of~\Cref{alg:sec-game-single}:
  each word $i$ contributes a local recovery game with its own
  independent $x_i$ and $\theta_i$, where each equality-test call is
  one round of interaction between the prover (the adversary) and the
  verifier (the challenger), and the joint
  adversary may use an arbitrary entangled strategy across the
  coordinates.  By perfect parallel repetition
  (\Cref{lem:parallel-repetition}), the probability of winning all $n$
  local games is the product of the single-word values $\omega(V_i)$.

  \emph{Step 3: bounding each local game separately.}  For each local
  game $i$ separately, the list-recovery reduction
  of~\Cref{lem:list-recovery-reduction} removes the equality-test
  oracle: any strategy that wins the single-word game with probability
  $\omega(V_i)$ yields an oracle-free strategy, receiving only
  $\ket{x_i}_{\theta_i}$, that outputs a list of at most $q+1$
  candidates containing $x_i$ with probability at least
  $\omega(V_i)$.  By the BB84
  list-recovery bound of~\Cref{lem:bb84-list-recovery-bound},
  $\omega(V_i)\le p(q,\ell)$ for every $i$.  Combining the steps gives
  \[
    \Pr[C_0\wedge C_1]
      \le (n+1)\prod_{i=1}^{n}\omega(V_i)
      \le (n+1)\,p(q,\ell)^n,
  \]
  as recorded in~\eqref{eq:local-direct-product}.
\end{proof}

\subsection{Simulation security}

\begin{theorem}[Query-bounded simulation security]
  \label{thm:master-thm-for-sec}
  For every adversary making at most $q$ adaptive classical queries to the
  classically accessible random oracle, with arbitrary quantum computation
  between queries, the protocol in~\Cref{alg:otm-protocol} admits a
  simulator making at most one query to $g^{m_0,m_1}$.  Its distinguishing
  advantage is at most
  \[
    \delta(n,\ell,q)=2^{1-n}+(n+1)
      \left(\frac12+\frac{q+1}{2^{\ell/2+1}}\right)^n.
  \]
  When the adversary, $n$, $\ell$, and $q$ are polynomial in $\lambda$,
  the simulator is efficient and this implies the computational CAROM
  statement.
\end{theorem}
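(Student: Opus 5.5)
The plan is to build an explicit simulator that lazily samples the random oracle, tracks which exact mask points the adversary has queried, and makes its single call to $g^{m_0,m_1}$ only once one branch has been fully opened. Concretely, $\Sim$ samples $x_i\getsr\bit^\ell$ and $\theta_i\getsr\bit$, prepares $\bigotimes_i\ket{x_i}_{\theta_i}$, and samples independent uniform $\tau_i,c_0,c_1\in\bit^\lambda$. It then runs $\cA$ and answers oracle queries with $\cO_{\Sim}$ as follows. It programs $\cO_{\mathrm{tag}}(i,x_i,\theta_i)=\tau_i$. All other points receive fresh uniform values, recorded for consistency, with one exception.

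The exception concerns the exact mask points $M_i$, $i\in S_b$. The first time some branch $b$ is about to become complete, meaning the adversary queries the last unqueried $M_i$ with $i\in S_b$, $\Sim$ queries $g^{m_0,m_1}(b)$ to obtain $m_b$. It then answers that query with the value that closes $\bigoplus_{i\in S_b}\cO_{\mathrm{mask}}(i,x_i,\theta)=c_b\oplus m_b$. If the other branch later becomes complete, $\Sim$ has no query left and answers uniformly; this is the only place the simulation may deviate. If $S_b=\varnothing$, the constraint for branch $b$ is vacuous only when $m_b$ happens to equal $c_b$, so we simply let $\Sim$ deviate there too.

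The analysis is a hybrid argument through the random-pad experiment of \Cref{lem:direct-product-hardness}. First I would observe that the real experiment is identically distributed to the random-pad experiment. In the real game, $\tau_i$ is a uniform oracle value at a point nobody else touches. Moreover, $c_b=m_b\oplus y_b$ with $y_b$ the XOR of uniform oracle values, so jointly sampling $c_b$ uniformly and then the mask values uniformly subject to the XOR constraint gives the same joint distribution. This holds whenever $S_b\neq\varnothing$, and the tag and mask oracles are domain-separated. Second, I would show that the simulator's experiment and the random-pad experiment coincide except on $E_{\mathrm{empty}}\cup(C_0\wedge C_1)$. Outside these events at most one branch gets all its mask points queried. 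For any partially queried branch, the queried values are uniform and independent, since the remaining unqueried summand acts as a one-time pad. Lazy sampling therefore reproduces the joint distribution, with the closing value programmed exactly when completion happens. This is the standard identical-until-bad argument. Note that the mask point uses the full pattern $\theta$, so queries under any wrong pattern are genuinely independent points.

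Combining these, the distinguishing advantage is at most $\Pr[E_{\mathrm{empty}}]+\Pr[C_0\wedge C_1]$ in the random-pad experiment, where both events depend only on the shared transcript up to the bad event. The first term is $2^{1-n}$. The second is at most $(n+1)p(q,\ell)^n$ by \Cref{lem:direct-product-hardness}. Together these give $\delta(n,\ell,q)$, and the bound $(n+3)(3/4)^n$ follows when $q\le 2^{\ell/2-1}-1$, since then $p\le 3/4$.

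The main obstacle I expect is making the identical-until-bad step rigorous against an adversary with quantum memory. The argument must couple the two experiments on classical query transcripts and argue that the bad event is defined identically in both. Because all oracle interaction is classical, the adversary's state conditioned on a transcript is the same in both worlds until $C_0\wedge C_1$ occurs, so the usual fundamental lemma applies. Efficiency is immediate: $\Sim$ does polynomial bookkeeping per query.
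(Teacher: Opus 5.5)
Your proposal is correct and takes essentially the same route as the paper. It uses the same lazily-sampling simulator that spends its single query to $g^{m_0,m_1}$ on the first completed branch, the same hop from the honest token to the random-pad (Hybrid~1) experiment paying $\Pr[E_{\mathrm{empty}}]=2^{1-n}$, and the same identical-until-$C_0\wedge C_1$ step bounded by \Cref{lem:direct-product-hardness}. One minor point: the simulator and the random-pad experiment in fact agree on $E_{\mathrm{empty}}$, because neither constrains an empty branch, so the $2^{1-n}$ term is paid only once, in the real-versus-random-pad hop, as your final bound correctly reflects.
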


\begin{proof}
  The hybrids below show that the real experiment and the ideal one
  differ only if the adversary completes both branches; the probability
  of this event is then bounded
  by~\Cref{lem:direct-product-hardness}.

  Let $E_{\mathrm{empty}}$ be as above.  In Hybrid~0, generate the token
  honestly.  In Hybrid~1, sample $c_0,c_1$ uniformly and, conditioned on
  $\neg E_{\mathrm{empty}}$, program one mask-oracle value per branch as
  the dependent value making
  \[
    \bigoplus_{i\in S_b}\cO_{\mathrm{mask}}(i,x_i,\theta)
      =c_b\oplus m_b.
  \]
  All other values are sampled lazily and consistently.  On
  $\neg E_{\mathrm{empty}}$, Hybrid~0 and Hybrid~1 have the same joint
  distribution; hence their statistical distance is at most
  $\Pr[E_{\mathrm{empty}}]=2^{1-n}$.

  The simulator samples $x_i,\theta_i$, prepares the BB84 registers, and
  samples all $\tau_i,c_0,c_1$ uniformly.  Because every query is
  classical, the simulator can inspect the transcript and detect the
  first query that completes a branch.  It maintains a table of every
  query--answer pair, initially containing the tag points
  $(i,x_i,\theta_i)\mapsto\tau_i$.  On a repeated query it returns the
  stored answer.  On a fresh exact mask query
  $(i,x_i,\theta)$, it checks whether every other mask point in the same
  branch has already been queried.  If so and the ideal functionality has
  not yet been queried, it obtains $m_{\theta_i}$ and returns the unique
  value satisfying the branch XOR constraint.  Otherwise it returns a
  fresh uniform value.  Every returned value, including a programmed one,
  is stored in the table.

  As long as at least one branch mask point remains unqueried, this table
  has exactly the Hybrid~1 distribution.  A discrepancy can occur only if
  both $C_0$ and $C_1$ occur: the first completion consumes the single
  ideal query, while the second completion would require the other message.
  By~\Cref{lem:direct-product-hardness}, this event has probability at
  most $(n+1)p(q,\ell)^n$.  Together with the empty-basis distance
  $2^{1-n}$ between Hybrids~0 and 1, this proves the stated bound.
\end{proof}

\section*{AI Disclaimer}
The main ideas and proofs are comed up by the authors without assistance of AI. We used OpenAI Codex to assist with drafting and editing portions of the
introduction and technical overview, as well as with formatting checks.
The authors verified the correctness and originality of all content,
including the references. The authors take full responsibility of the paper.

\bibliographystyle{alpha} \bibliography{abbrev3,crypto,bibliography}

@string{ieee =                  {IEEE}}

@string{acm =                   "Association for Computing Machinery"}

@string{acm =                   "{ACM}"}

@article{vidick2016quantum,
  title={Quantum proofs},
  author={Vidick, Thomas and Watrous, John},
  journal={Foundations and Trends in Theoretical Computer Science},
  volume={11},
  number={1-2},
  pages={1--215},
  year={2016},
  publisher={Emerald Publishing Limited}
}

@ARTICLE{Win99gentle,
  author={Winter, A.},
  journal={IEEE Transactions on Information Theory}, 
  title={Coding theorem and strong converse for quantum channels}, 
  year={1999},
  volume={45},
  number={7},
  pages={2481-2485},
  doi={10.1109/18.796385}}

@misc{stambler2025geometric,
  author = {Lev Stambler},
  title = {Information Theoretic One-Time Programs from Geometrically
    Local {QNC0} Adversaries},
  year = {2025},
  eprint = {2503.22016},
  archivePrefix = {arXiv},
  primaryClass = {quant-ph},
  url = {https://arxiv.org/abs/2503.22016}
}

@misc{stambler2025qotm,
  author = {Lev Stambler},
  title = {Quantum One-Time Memories from Stateless Hardware, Random
    Access Codes, and Simple Nonconvex Optimization},
  year = {2025},
  eprint = {2501.04168},
  archivePrefix = {arXiv},
  primaryClass = {quant-ph},
  url = {https://arxiv.org/abs/2501.04168}
}

@misc{stambler2025semquantum,
  author = {Lev Stambler},
  title = {Cryptography without Long-Term Quantum Memory and Global
    Entanglement: Classical Setups for One-Time Programs, Copy
    Protection, and Stateful Obfuscation},
  year = {2025},
  eprint = {2504.21842},
  archivePrefix = {arXiv},
  primaryClass = {quant-ph},
  url = {https://arxiv.org/abs/2504.21842}
}

@misc{stambler2026simple,
  author = {Lev Stambler},
  title = {Towards Simple and Useful One-Time Programs in the Quantum
    Random Oracle Model},
  year = {2026},
  eprint = {2601.13258},
  archivePrefix = {arXiv},
  primaryClass = {quant-ph},
  url = {https://arxiv.org/abs/2601.13258}
}

@article{wiesner1983conjugate,
  title={Conjugate coding},
  author={Wiesner, Stephen},
  journal={ACM Sigact News},
  volume={15},
  number={1},
  pages={78--88},
  year={1983},
  publisher={ACM New York, NY, USA}
}

@article{CGLZ19,
  title={Cryptography with disposable backdoors},
  author={Chung, Kai-Min and Georgiou, Marios and Lai, Ching-Yi and Zikas, Vassilis},
  journal={Cryptography},
  volume={3},
  number={3},
  pages={22},
  year={2019},
  publisher={MDPI}
}

\end{document}